\providecommand{\U}[1]{\protect\rule{.1in}{.1in}}
\newtheorem{theorem}{Theorem}
\newtheorem{corollary}[theorem]{Corollary}
\newtheorem{lemma}[theorem]{Lemma}
\newtheorem{remark}[theorem]{Remark}
\newenvironment{proof}[1][Proof]{\noindent\textbf{#1.} }{\ \rule{0.5em}{0.5em}}
\begin{document}

\title{Monotonicity of $p$-norms of multiple operators via unitary swivels}
\author{Mark M. Wilde\thanks{Hearne Institute for Theoretical Physics, Department of
Physics and Astronomy, Center for Computation and Technology, Louisiana State
University, Baton Rouge, Louisiana 70803, USA}}
\date{July 26, 2016}
\maketitle

\begin{abstract}
Following the various statements of \cite{Dupuis2016} to their logical
conclusion, this note explicitly argues the following statement, implicit in
\cite{Dupuis2016}: for positive semi-definite operators $C_{1},\ldots,\,C_{L}%
$, a unitary $V_{C_{i}}$ commuting with $C_{i}$, and $p\geq1$, the quantity%
$$
\max_{V_{C_{1}},\ldots,V_{C_{L}}}\left\Vert C_{1}^{1/p}V_{C_{1}}\cdots
C_{L}^{1/p}V_{C_{L}}\right\Vert _{p}^{p}%
$$
is monotone non-increasing with respect to $p$. The idea from \cite{Dupuis2016} is that by allowing
unitary swivels connecting a long chain of positive semi-definite operators together, we can
establish such a statement, which might not hold generally without the presence
of the unitary swivels. Other related statements follow directly from \cite{Dupuis2016} as well, being implicit there, and are given
explicitly in this note.

\end{abstract}

\section{Introduction}

In this short note, I conduct the exercise of combining the various statements
given in \cite{Dupuis2016}\ and taking them to their logical conclusion. The
result is a monotonicity inequality regarding $p$-norms of multiple operators
strung together in a sequence. The only modification I make to the prior
statements from \cite{Dupuis2016}\ is to substitute density operators with
general positive semidefinite operators. In \cite{Dupuis2016}, my coauthor and
I were motivated by concerns in quantum information theory, and so there we
worked exclusively with density operators (positive semi-definite operators
with trace equal to one); however, it is obvious that all of the inequalities
established there extend to the more general case when the operators are
positive semi-definite with no restriction on their trace.

One of the main messages of \cite{Dupuis2016} is that it is possible to
establish non-trivial orderings of generalized R\'{e}nyi entropies formed by connecting
the marginals of density operators together in a product under a Schatten
$p$-norm, while at the same time allowing for \textquotedblleft unitary
swivels\textquotedblright\ between these operators. In \cite{Dupuis2016}, my
coauthor and I used the phrase \textquotedblleft unitary
swivels\textquotedblright\ to describe the method for arriving at the
aforementioned inequalities, because, in spite of the fact that
straightforward multi-operator extensions of the statements do not appear to
be generally true, we showed how they hold if allowing for unitary swivels
interleaved in a large chain of operators connected together. The bedrock upon
which these results rested is the powerful method of complex interpolation
\cite{BL76}, which has found a number of applications in a variety of areas in
mathematics and physics.

To begin with, let us recall the following explicit statement from
\cite[Proposition~18]{Dupuis2016}, as specialized in \cite[Corollary~19]%
{Dupuis2016}:\footnote{Here and throughout, I am following the labeling in
the arXiv post for \cite{Dupuis2016}.}%
\begin{equation}
\max_{V_{\rho_{C}}}\left\Vert \rho_{AC}^{1/p}V_{\rho_{C}}\rho_{C}^{-1/p}%
\rho_{BC}^{1/p}\right\Vert _{p}^{p}\text{ is monotone non-increasing for
}p\geq2,
\end{equation}
where $\rho_{ABC}$ is a density operator acting on a Hilbert space
$\mathcal{H}_{A}\otimes\mathcal{H}_{B}\otimes\mathcal{H}_{C}$, $\rho
_{BC}=\operatorname{Tr}_{A}\{\rho_{ABC}\}$, $\rho_{AC}=\operatorname{Tr}%
_{B}\{\rho_{ABC}\}$, and $\rho_{C}=\operatorname{Tr}_{AB}\{\rho_{ABC}\}$ are
its marginals, $V_{\rho_{C}}$ is a unitary commuting with $\rho_{C}$, and
$\left\Vert X\right\Vert _{p}\equiv\lbrack\operatorname{Tr}\{\left\vert
X\right\vert ^{p}\}]^{1/p}$ is the Schatten $p$-norm of an operator $X$. In
\cite[Section~6]{Dupuis2016}, it was discussed how one can chain together
various density operators acting on tensor-product Hilbert spaces and obtain
results similar to those given in the rest of the paper \cite{Dupuis2016}.
Carrying this through, the conclusion is that the following statement holds%
\begin{equation}
\max_{V_{C_{1}},\ldots,V_{C_{L}}}\left\Vert C_{1}^{1/p}V_{C_{1}}C_{2}%
^{1/p}V_{C_{2}}\cdots C_{L}^{1/p}V_{C_{L}}\right\Vert _{p}^{p}\text{ is
monotone non-increasing for }p\geq2,\label{eq:first-one}%
\end{equation}
for $C_{1}$, \ldots, $C_{L}$ density operators and $V_{C_{i}}$ a unitary
commuting with $C_{i}$. In \cite[Remark~12]{Dupuis2016}, it was mentioned how
the optimizations over commuting unitaries can be replaced with more explicit
bounds found by applying the Stein--Hirschman operator interpolation theorem
\cite{S56,H52}. Carrying this statement through as well, the conclusion is
that the following inequality holds for $2\leq q\leq p$:%
\begin{equation}
\log\left\Vert C_{1}^{1/p}C_{2}^{1/p}\cdots C_{L}^{1/p}\right\Vert _{p}%
^{p}\leq\int_{-\infty}^{\infty}dt\ \beta_{q/p}(t)\ \log\left\Vert
C_{1}^{\left(  1+it\right)  /q}C_{2}^{\left(  1+it\right)  /q}\cdots
C_{L}^{\left(  1+it\right)  /q}\right\Vert _{q}^{q},\label{eq:multi-op-trace}%
\end{equation}
if $C_{1}$, \ldots, $C_{L}$ are density operators and $\beta_{\theta}%
(t)\equiv\sin(\pi\theta)/(2\theta\left[  \cosh(\pi t)+\cos(\pi\theta)\right]
)$, a probability distribution over $t\in\mathbb{R}$ and with a parameter
$\theta\in\left[  0,1\right]  $. In \cite[Section~6]{Dupuis2016}, it was also
discussed how one can obtain limits of the inequalities presented in the paper
by applying the well known Lie-Trotter product formula. Carrying this through
(i.e., taking the limit $p\rightarrow\infty$), the conclusion is that the
following inequality holds%
\begin{equation}
\log\operatorname{Tr}\left\{  \exp\left\{  \log C_{1}+\cdots+\log
C_{L}\right\}  \right\}  \leq\int_{-\infty}^{\infty}dt\ \beta_{0}%
(t)\ \log\left\Vert C_{1}^{\left(  1+it\right)  /q}C_{2}^{\left(  1+it\right)
/q}\cdots C_{L}^{\left(  1+it\right)  /q}\right\Vert _{q}^{q}%
,\label{eq:multi-op-GT}%
\end{equation}
where $\beta_{0}(t)\equiv\lim_{\theta\rightarrow0}\beta_{\theta}%
(t)=\pi/\left(  2\left[  \cosh(\pi t)+1\right]  \right)  $. By inspection of
the proof given in \cite[Proposition~18]{Dupuis2016}, it is clear that the
inequalities in \eqref{eq:multi-op-trace}--\eqref{eq:multi-op-GT} hold for
positive semi-definite operators as well. We can also see from that proof that
\eqref{eq:multi-op-trace} holds more generally for $1\leq q\leq p$ and
\eqref{eq:multi-op-GT} for $1\leq q$.

\section{Explicit Proofs of \eqref{eq:first-one}--\eqref{eq:multi-op-GT}}

In the rest of this note, I give explicit proofs of
\eqref{eq:first-one}--\eqref{eq:multi-op-GT} for the benefit of the reader,
following the steps outlined in \cite{Dupuis2016} line by line.

\begin{theorem}
Let $C_{1},\ldots,C_{L}$ be positive semi-definite operators, let $V_{C_{i}}$
denote a unitary commuting with $C_{i}$ for all $i\in\left\{  1,\ldots
,L\right\}  $, and let $p\geq1$. Then the following quantity is monotone
non-increasing with respect to $p$:%
\begin{equation}
\max_{V_{C_{1}},\ldots,V_{C_{L}}}\left\Vert C_{1}^{1/p}V_{C_{1}}\cdots
C_{L}^{1/p}V_{C_{L}}\right\Vert _{p}^{p}.
\end{equation}

\end{theorem}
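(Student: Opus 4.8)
The plan is to establish, for every pair of exponents $1 \le q \le p$, the inequality
$$\max_{V_{C_1},\ldots,V_{C_L}}\left\Vert C_1^{1/p}V_{C_1}\cdots C_L^{1/p}V_{C_L}\right\Vert_p^p \le \max_{W_{C_1},\ldots,W_{C_L}}\left\Vert C_1^{1/q}W_{C_1}\cdots C_L^{1/q}W_{C_L}\right\Vert_q^q,$$
which is exactly the asserted monotonicity. Fix unitaries $V_{C_i}$ commuting with $C_i$, put $\theta = q/p \in (0,1]$, and on the closed strip $\overline{S} = \{z \in \mathbb{C} : 0 \le \operatorname{Re}(z) \le 1\}$ define the operator-valued function
$$G(z) \equiv C_1^{z/q}V_{C_1}\,C_2^{z/q}V_{C_2}\cdots C_L^{z/q}V_{C_L},$$
where each power $C_i^{w}$ is taken through the functional calculus on $\operatorname{supp}(C_i)$, so that $z \mapsto C_i^{z/q} = \exp((z/q)\log C_i)$ is entire. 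Since $\left\Vert C_i^{z/q}\right\Vert_\infty \le \max\{1,\left\Vert C_i\right\Vert_\infty^{1/q}\}$ on $\overline{S}$, the function $G$ is holomorphic on the interior of the strip and bounded and continuous on $\overline{S}$; moreover at $z = \theta$ it recovers the operator of interest, $G(\theta) = C_1^{1/p}V_{C_1}\cdots C_L^{1/p}V_{C_L}$.

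Next I would control $G$ on the two edges of the strip. On $\operatorname{Re}(z) = 0$ every $C_i^{it/q}$ is unitary on $\operatorname{supp}(C_i)$, hence of operator norm at most one, and every $V_{C_i}$ is unitary, so $\left\Vert G(it)\right\Vert_\infty \le 1$ for all $t$. On $\operatorname{Re}(z) = 1$ the unitary swivels enter: using $C_i^{(1+it)/q} = C_i^{1/q}C_i^{it/q}$ one regroups
$$G(1+it) = C_1^{1/q}\left(C_1^{it/q}V_{C_1}\right)C_2^{1/q}\left(C_2^{it/q}V_{C_2}\right)\cdots C_L^{1/q}\left(C_L^{it/q}V_{C_L}\right),$$
and since $C_i^{it/q}V_{C_i}$ commutes with $C_i$ and is unitary on $\operatorname{supp}(C_i)$, it agrees there with some unitary $W_{C_i}(t)$ commuting with $C_i$ (its value on $\ker C_i$ being irrelevant, as $C_i^{1/q}$ annihilates $\ker C_i$). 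Hence $G(1+it) = C_1^{1/q}W_{C_1}(t)\cdots C_L^{1/q}W_{C_L}(t)$, so $\left\Vert G(1+it)\right\Vert_q \le M^{1/q}$ for all $t$, where $M \equiv \max_{W_{C_1},\ldots,W_{C_L}}\left\Vert C_1^{1/q}W_{C_1}\cdots C_L^{1/q}W_{C_L}\right\Vert_q^q$.

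With the endpoint bounds $\left\Vert G(it)\right\Vert_\infty \le 1$ and $\left\Vert G(1+it)\right\Vert_q \le M^{1/q}$ in hand, I would invoke the operator-valued Hadamard three-lines theorem (complex interpolation between the Schatten classes $\mathcal{S}_\infty$ on $\operatorname{Re}(z)=0$ and $\mathcal{S}_q$ on $\operatorname{Re}(z)=1$; see \cite{BL76}), whose interpolated exponent at $z = \theta$ satisfies $1/p_\theta = (1-\theta)/\infty + \theta/q = \theta/q = 1/p$. This yields
$$\left\Vert C_1^{1/p}V_{C_1}\cdots C_L^{1/p}V_{C_L}\right\Vert_p = \left\Vert G(\theta)\right\Vert_p \le 1^{\,1-\theta}\left(M^{1/q}\right)^{\theta} = M^{1/p};$$
raising to the $p$-th power and taking the maximum over $V_{C_1},\ldots,V_{C_L}$ gives the displayed inequality, hence the theorem. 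I do not expect a deep obstacle, since the argument tracks \cite[Proposition~18]{Dupuis2016} line by line; the conceptual crux is the swivel regrouping on the line $\operatorname{Re}(z) = 1$, and the only points demanding real care are those introduced by replacing density operators with general positive semi-definite operators — reading $C_i^{z/q}$ through the functional calculus on $\operatorname{supp}(C_i)$ so that $G$ is analytic and continuous up to $z = 0$, checking that each swivel $C_i^{it/q}V_{C_i}$ completes to an honest unitary commuting with $C_i$, and noting that the relevant Schatten norms are finite — together with pinning down the exact form of the operator interpolation theorem being used.
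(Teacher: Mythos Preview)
Your proposal is correct and follows essentially the same argument as the paper: the same analytic family $G(z)=C_1^{z/q}V_{C_1}\cdots C_L^{z/q}V_{C_L}$, the same choices $p_0=\infty$, $p_1=q$, $\theta=q/p$, the same endpoint estimates, and the same application of the Hadamard three-lines theorem. If anything, you are slightly more careful than the paper in spelling out the functional-calculus and kernel issues for merely positive semi-definite $C_i$; the paper treats these points implicitly.
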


\begin{proof}
The proof of this statement is essentially identical to the proof of
\cite[Proposition~18]{Dupuis2016}. It is a consequence of a well known complex
interpolation theorem recalled as Lemma~\ref{thm:hadamard} below. Let
$V_{C_{1}},\ldots,V_{C_{L}}$ denote a set of fixed unitaries, where $V_{C_{i}%
}$ commutes with $C_{i}$. Let $q$ be such that $1\leq q<p$ (there is nothing
to prove if $q=p$). For $z \in \mathbb{C}$, pick%
\begin{align}
G(z) &  =C_{1}^{z/q}V_{C_{1}}\cdots C_{L}^{z/q}V_{C_{L}},\label{eq:g(z)}\\
p_{0} &  =\infty,\\
p_{1} &  =q,\\
\theta &  =q/p,\label{eq:theta}%
\end{align}
the choices above being identical to those in \cite[Eq.~(7.6)-(7.9)]%
{Dupuis2016}. This implies that $p_{\theta}=p$. Applying
Lemma~\ref{thm:hadamard} gives%
\begin{equation}
\left\Vert G(\theta)\right\Vert _{p}\leq\sup_{t\in\mathbb{R}}\left\Vert
G(it)\right\Vert _{\infty}^{1-\theta}\sup_{t\in\mathbb{R}}\left\Vert
G(1+it)\right\Vert _{q}^{\theta}.
\end{equation}
Consider that%
\begin{align}
\left\Vert G(\theta)\right\Vert _{p} &  =\left\Vert C_{1}^{1/p}V_{C_{1}}\cdots
C_{L}^{1/p}V_{C_{L}}\right\Vert _{p},\\
\left\Vert G(it)\right\Vert _{\infty} &  =\left\Vert C_{1}^{it/q}V_{C_{1}%
}\cdots C_{L}^{it/q}V_{C_{L}}\right\Vert _{\infty}\leq1,\\
\left\Vert G(1+it)\right\Vert _{q} &  =\left\Vert C_{1}^{\left(  1+it\right)
/q}V_{C_{1}}\cdots C_{L}^{\left(  1+it\right)  /q}V_{C_{L}}\right\Vert _{q}\\
&  =\left\Vert C_{1}^{1/q}C_{1}^{it/q}V_{C_{1}}\cdots C_{L}^{1/q}C_{L}%
^{it/q}V_{C_{L}}\right\Vert _{q}\\
&  \leq\max_{W_{C_{1}},\ldots,W_{C_{L}}}\left\Vert C_{1}^{1/q}W_{C_{1}}\cdots
C_{L}^{1/q}W_{C_{L}}\right\Vert _{q},
\end{align}
which are conclusions identical to those in \cite[Eq.~(7.11)-(7.17)]%
{Dupuis2016}. Putting everything together, we find that, for all $V_{C_{1}%
},\ldots,V_{C_{L}}$, the following inequality holds%
\begin{equation}
\left\Vert C_{1}^{1/p}V_{C_{1}}\cdots C_{L}^{1/p}V_{C_{L}}\right\Vert _{p}%
\leq\max_{W_{C_{1}},\ldots,W_{C_{L}}}\left\Vert C_{1}^{1/q}W_{C_{1}}\cdots
C_{L}^{1/q}W_{C_{L}}\right\Vert _{q}^{\theta},
\end{equation}
which is equivalent to%
\begin{equation}
\left\Vert C_{1}^{1/p}V_{C_{1}}\cdots C_{L}^{1/p}V_{C_{L}}\right\Vert _{p}%
^{p}\leq\max_{W_{C_{1}},\ldots,W_{C_{L}}}\left\Vert C_{1}^{1/q}W_{C_{1}}\cdots
C_{L}^{1/q}W_{C_{L}}\right\Vert _{q}^{q}.
\label{eq:almost-done}
\end{equation}
Since \eqref{eq:almost-done} holds for all $V_{C_{1}},\ldots,V_{C_{L}}$, the statement
of the theorem follows.
\end{proof}

\begin{theorem}
\label{thm:univ-ineq}Let $C_{1},\ldots,C_{L}$ be positive semi-definite
operators, and let $p>q\geq1$. Then the following inequality holds:%
\begin{equation}
\log\left\Vert C_{1}^{1/p}C_{2}^{1/p}\cdots C_{L}^{1/p}\right\Vert _{p}%
^{p}\leq\int_{-\infty}^{\infty}dt\ \beta_{q/p}(t)\ \log\left\Vert
C_{1}^{\left(  1+it\right)  /q}C_{2}^{\left(  1+it\right)  /q}\cdots
C_{L}^{\left(  1+it\right)  /q}\right\Vert _{q}^{q}.
\end{equation}

\end{theorem}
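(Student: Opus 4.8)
The plan is to invoke the Stein--Hirschman operator interpolation theorem (the references \cite{S56,H52} mentioned in connection with \cite[Remark~12]{Dupuis2016}), which upgrades the qualitative Hadamard three-lines bound used in the proof of the first theorem to a quantitative one with an explicit kernel $\beta_{\theta}(t)$. Concretely, I would take the same analytic operator-valued function as before but \emph{without} the swivel unitaries, namely
\begin{equation}
G(z)=C_{1}^{z/q}C_{2}^{z/q}\cdots C_{L}^{z/q},
\end{equation}
entire in $z$ (since each $C_{i}$ is positive semi-definite, $C_{i}^{z/q}$ is well defined and analytic), together with the parameter choices $p_{0}=\infty$, $p_{1}=q$, and $\theta=q/p$, exactly as in \eqref{eq:g(z)}--\eqref{eq:theta}, so that $p_{\theta}=p$. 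The Stein--Hirschman theorem then yields
\begin{equation}
\log\left\Vert G(\theta)\right\Vert _{p}\leq\int_{-\infty}^{\infty}dt\ \alpha_{\theta}(t)\ \log\left\Vert G(it)\right\Vert _{\infty}+\int_{-\infty}^{\infty}dt\ \beta_{\theta}(t)\ \log\left\Vert G(1+it)\right\Vert _{q},
\end{equation}
where $\alpha_{\theta}$ and $\beta_{\theta}$ are the two Poisson-type kernels for the strip (densities summing appropriately), with $\beta_{\theta}$ the one given in the statement.

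The next step is to bound the two families of norms appearing on the boundary lines, just as in the proof of the first theorem. On the line $\operatorname{Re}(z)=0$ we have $G(it)=C_{1}^{it/q}\cdots C_{L}^{it/q}$, a product of unitaries (on the supports of the $C_{i}$), hence $\left\Vert G(it)\right\Vert _{\infty}\leq 1$, so $\log\left\Vert G(it)\right\Vert _{\infty}\leq 0$ and the first integral can be dropped. On the line $\operatorname{Re}(z)=1$ we write $G(1+it)=C_{1}^{1/q}C_{1}^{it/q}\cdots C_{L}^{1/q}C_{L}^{it/q}$; here the key point is that the extra factors $C_{i}^{it/q}$ are unitaries commuting with $C_{i}$, so they can be absorbed and, using unitary invariance of the Schatten norm together with the fact that conjugating or post-multiplying by unitaries does not increase (indeed preserves) the $q$-norm, one obtains $\left\Vert G(1+it)\right\Vert _{q}=\left\Vert C_{1}^{1/q}C_{2}^{1/q}\cdots C_{L}^{1/q}\right\Vert _{q}$ for every $t$ — or at worst $\leq$ this quantity — in either case with no $t$-dependence. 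Since $\int_{-\infty}^{\infty}\beta_{q/p}(t)\,dt=1$, the second integral collapses to $\log\left\Vert C_{1}^{1/q}\cdots C_{L}^{1/q}\right\Vert _{q}$.

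Multiplying the resulting inequality $\left\Vert G(\theta)\right\Vert _{p}\leq\left\Vert C_{1}^{1/q}\cdots C_{L}^{1/q}\right\Vert _{q}^{\theta}$ through by $p$ after taking logs, i.e. raising to the $p$-th power with $\theta=q/p$, gives $\left\Vert C_{1}^{1/p}\cdots C_{L}^{1/p}\right\Vert _{p}^{p}\leq\left\Vert C_{1}^{1/q}\cdots C_{L}^{1/q}\right\Vert _{q}^{q}$, which is the $t$-independent shadow of the claimed bound. To get the stated integral form one does \emph{not} perform the final collapse of the $t$-integral on the $\operatorname{Re}(z)=1$ line: instead keep $\log\left\Vert C_{1}^{(1+it)/q}\cdots C_{L}^{(1+it)/q}\right\Vert_{q}^{q}$ inside the integral against $\beta_{q/p}(t)$, and only discard the $\operatorname{Re}(z)=0$ integral (legitimate since its integrand is $\leq 0$). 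Rescaling the logarithm of norms by the factor $p=q/\theta$ converts $\theta\log\left\Vert\cdot\right\Vert_{q}$ into $\tfrac{q}{p}\cdot\tfrac{1}{q}\log\left\Vert\cdot\right\Vert_{q}^{q}$ paired with the kernel $\beta_{q/p}(t)$, producing exactly the displayed inequality. The main obstacle is purely bookkeeping: getting the precise normalization of the Stein--Hirschman kernels right (which of the two boundary kernels is $\beta_{\theta}$, and that it integrates to $1$) and tracking the exponents $1/p$ versus $1/q$ and the factors of $p$ and $q$ through the $\log$; there is no analytic difficulty beyond what is already present in the first theorem, since the operator function $G(z)$ is entire and bounded on the strip thanks to positive semi-definiteness.
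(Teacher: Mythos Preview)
Your argument for the stated theorem is correct and matches the paper's proof exactly: take $G(z)=C_{1}^{z/q}\cdots C_{L}^{z/q}$ (i.e.\ trivial swivels $V_{C_i}=I$), set $p_{0}=\infty$, $p_{1}=q$, $\theta=q/p$, apply the Stein--Hirschman lemma, discard the $\operatorname{Re}(z)=0$ integral using $\|G(it)\|_{\infty}\le 1$, and multiply through by $p$ to convert $\theta\log\|\cdot\|_{q}$ into $\log\|\cdot\|_{q}^{q}$.

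One caveat, though it does not affect the proof: your aside asserting that $\|G(1+it)\|_{q}$ equals, or is at worst bounded above by, $\|C_{1}^{1/q}\cdots C_{L}^{1/q}\|_{q}$ is false. Unitary invariance of the Schatten norm lets you strip off only the \emph{outermost} unitary factors, not the $C_{i}^{it/q}$ sandwiched between non-commuting operators in the middle of the product. Indeed, this non-collapse is precisely the reason the first theorem requires the maximization over swivels and why the present theorem must retain the $t$-integral rather than reducing to a single $q$-norm. Since you correctly note that the proof of the stated inequality does \emph{not} invoke this collapse, the argument stands; just delete the paragraph deriving the ``$t$-independent shadow.''
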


\begin{proof}
Here we directly follow the suggestion from \cite[Remark~12]{Dupuis2016}. Pick
$G(z)$, $p_{0}$, $p_{1}$, and $\theta$ as in
\eqref{eq:g(z)}--\eqref{eq:theta}, with
$V_{C_{1}}=\cdots=V_{C_{L}} = I$. Applying Lemma~\ref{thm:op-hirschman}
below, we find that%
\begin{equation}
\log\left\Vert G(\theta)\right\Vert _{p_{\theta}}\leq\int_{-\infty}^{\infty
}dt\ \alpha_{\theta}(t)\log\left\Vert G(it)\right\Vert _{p_{0}}^{1-\theta
}+\beta_{\theta}(t)\log\left\Vert G(1+it)\right\Vert _{p_{1}}^{\theta}.
\end{equation}
After using that%
\begin{equation}
\log\left\Vert G(it)\right\Vert _{p_{0}}^{1-\theta}=\log\left\Vert
C_{1}^{it/q}\cdots C_{L}^{it/q}\right\Vert _{\infty
}^{1-\theta}\leq0,
\end{equation}
as recalled above,
we are left with%
\begin{equation}
\log\left\Vert G(\theta)\right\Vert _{p_{\theta}}\leq\int_{-\infty}^{\infty
}dt\ \beta_{\theta}(t)\log\left\Vert G(1+it)\right\Vert _{p_{1}}^{\theta}.
\end{equation}
This is then equivalent to the statement of the theorem.
\end{proof}

\begin{corollary}
Let $C_{1},\ldots,C_{L}$ be positive definite operators, and let $q\geq1$.
Then the following inequality holds:%
\begin{equation}
\log\operatorname{Tr}\left\{  \exp\left\{  \log C_{1}+\cdots+\log
C_{L}\right\}  \right\}  \leq\int_{-\infty}^{\infty}dt\ \beta_{0}%
(t)\ \log\left\Vert C_{1}^{\left(  1+it\right)  /q}C_{2}^{\left(  1+it\right)
/q}\cdots C_{L}^{\left(  1+it\right)  /q}\right\Vert _{q}^{q}.
\end{equation}

\end{corollary}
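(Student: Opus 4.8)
The plan is to obtain the corollary by letting $p\to\infty$ in Theorem~\ref{thm:univ-ineq} with $q\geq1$ held fixed (for $p>q$ the theorem applies, which holds once $p$ is large), mirroring the Lie--Trotter limiting procedure indicated in \cite[Section~6]{Dupuis2016}. Concretely, I would start from
\begin{equation}
\log\left\Vert C_{1}^{1/p}\cdots C_{L}^{1/p}\right\Vert _{p}^{p}\leq\int_{-\infty}^{\infty}dt\ \beta_{q/p}(t)\ \log\left\Vert C_{1}^{\left(  1+it\right)  /q}\cdots C_{L}^{\left(  1+it\right)  /q}\right\Vert _{q}^{q}
\end{equation}
and analyze the two sides separately in the limit $p\to\infty$.

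For the left-hand side, put $X_{p}\equiv C_{1}^{1/p}\cdots C_{L}^{1/p}$, so that $\left\Vert X_{p}\right\Vert _{p}^{p}=\operatorname{Tr}\{(X_{p}^{\dagger}X_{p})^{p/2}\}$, and expand $X_{p}^{\dagger}X_{p}=C_{L}^{1/p}\cdots C_{2}^{1/p}\,C_{1}^{1/p}C_{1}^{1/p}\,C_{2}^{1/p}\cdots C_{L}^{1/p}$ as a product of $2L$ factors of the form $C_{i}^{1/p}=\exp\!\left((\log C_{i})/p\right)$, whose logarithms add up to $2(\log C_{1}+\cdots+\log C_{L})$. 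Since each $C_{i}$ is positive definite, each $\log C_{i}$ is a bounded self-adjoint operator, so the Lie--Trotter product formula $\lim_{n\to\infty}(e^{A_{1}/n}\cdots e^{A_{k}/n})^{n}=e^{A_{1}+\cdots+A_{k}}$ yields $(X_{p}^{\dagger}X_{p})^{p}\to\exp\!\left(2(\log C_{1}+\cdots+\log C_{L})\right)$; taking square roots, a norm-continuous operation on positive operators, gives $(X_{p}^{\dagger}X_{p})^{p/2}\to\exp(\log C_{1}+\cdots+\log C_{L})$. Passing to the trace, the left-hand side converges to $\log\operatorname{Tr}\{\exp(\log C_{1}+\cdots+\log C_{L})\}$.

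For the right-hand side, observe that the integrand $\log\left\Vert C_{1}^{\left(  1+it\right)  /q}\cdots C_{L}^{\left(  1+it\right)  /q}\right\Vert _{q}^{q}$ does not depend on $p$; only the weight does, and $\beta_{q/p}(t)\to\beta_{0}(t)=\pi/(2[\cosh(\pi t)+1])$ pointwise as $p\to\infty$, since $\theta=q/p\to0$ with $\sin(\pi\theta)/(2\theta)\to\pi/2$ and $\cos(\pi\theta)\to1$. Because each $C_{i}$ is positive definite, each $C_{i}^{it/q}$ is unitary, so H\"{o}lder's inequality bounds $\left\Vert C_{1}^{\left(  1+it\right)  /q}\cdots C_{L}^{\left(  1+it\right)  /q}\right\Vert _{q}$ above by $\prod_{i}\left\Vert C_{i}^{1/q}\right\Vert _{qL}$ and below by $\left(\prod_{i}\left\Vert C_{i}^{-1/q}\right\Vert _{\infty}\right)^{-1}$, uniformly in $t$, so the integrand is a bounded function of $t$; combined with the estimate $\beta_{q/p}(t)\leq\pi/(2\cosh(\pi t))$ valid whenever $p\geq2q$ (so that $\cos(\pi q/p)\geq0$), this justifies passing the limit inside the integral by dominated convergence. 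Hence the right-hand side converges to $\int_{-\infty}^{\infty}dt\ \beta_{0}(t)\ \log\left\Vert C_{1}^{\left(  1+it\right)  /q}\cdots C_{L}^{\left(  1+it\right)  /q}\right\Vert _{q}^{q}$, and combining the two limits yields the claimed inequality.

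The main obstacle is making the left-hand limit fully rigorous, i.e.\ establishing $(X_{p}^{\dagger}X_{p})^{p/2}\to\exp(\log C_{1}+\cdots+\log C_{L})$ in a topology strong enough to commute with the trace; this is immediate in finite dimensions (the setting of primary interest), and in the general trace-class case requires the product formula in the appropriate operator topology together with a suitable trace-norm domination. The right-hand side is comparatively routine, the one point deserving care being the uniform two-sided bound on the integrand --- and this is precisely where positive definiteness of the $C_{i}$, rather than mere positive semi-definiteness, enters.
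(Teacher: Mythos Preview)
Your proposal is correct and follows essentially the same approach as the paper: take $p\to\infty$ in Theorem~\ref{thm:univ-ineq} and use the multioperator Lie--Trotter product formula on the left-hand side, with $\beta_{q/p}\to\beta_{0}$ on the right. The only cosmetic difference is that the paper substitutes $p\mapsto 2p$ so that $\left\Vert C_{1}^{1/2p}\cdots C_{L}^{1/2p}\right\Vert_{2p}^{2p}=\operatorname{Tr}\{[C_{L}^{1/2p}\cdots C_{1}^{1/p}\cdots C_{L}^{1/2p}]^{p}\}$ is already in Lie--Trotter form without your extra square-root step, and the paper leaves the dominated-convergence justification for the right-hand side implicit, whereas you spell it out.
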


\begin{proof}
Consider that
\begin{equation}
\left\Vert C_{1}^{1/2p}C_{2}^{1/2p}\cdots C_{L}^{1/2p}\right\Vert _{2p}%
^{2p}=\operatorname{Tr}\left\{  \left[  C_{L}^{1/2p}\cdots C_{2}^{1/2p}%
C_{1}^{1/p}C_{2}^{1/2p}\cdots C_{L}^{1/2p}\right]  ^{p}\right\}  .
\end{equation}
Then by the multioperator Lie--Trotter product formula \cite{S85}, we have
that%
\begin{equation}
\lim_{p\rightarrow\infty}\operatorname{Tr}\left\{  \left[  C_{L}^{1/2p}\cdots
C_{2}^{1/2p}C_{1}^{1/p}C_{2}^{1/2p}\cdots C_{L}^{1/2p}\right]  ^{p}\right\}
=\operatorname{Tr}\left\{  \exp\left\{  \log C_{1}+\cdots+\log C_{L}\right\}
\right\}  .
\end{equation}
The inequality in the statement of the corollary is then a direct consequence
of Theorem~\ref{thm:univ-ineq} and the above.
\end{proof}

\begin{lemma}
\label{thm:hadamard}Let $S\equiv\left\{  z\in\mathbb{C}:0\leq\operatorname{Re}%
\left\{  z\right\}  \leq1\right\}  $, and let $L(\mathcal{H})$ be the space of
bounded linear operators acting on a Hilbert space $\mathcal{H}$. Let
$G:S\rightarrow L(\mathcal{H})$ be a bounded map that is holomorphic on the
interior of $S$ and continuous on the boundary.\footnote{A map $G:S\rightarrow
L(\mathcal{H})$ is holomorphic (continuous, bounded) if the corresponding
functions to matrix entries are holomorphic (continuous, bounded).} Let
$\theta\in\left(  0,1\right)  $ and define $p_{\theta}$ by%
\begin{equation}
\frac{1}{p_{\theta}}=\frac{1-\theta}{p_{0}}+\frac{\theta}{p_{1}}%
,\label{eq:p-relation}%
\end{equation}
where $p_{0},p_{1}\in\lbrack1,\infty]$. For $k=0,1$ define%
\begin{equation}
M_{k}=\sup_{t\in\mathbb{R}}\left\Vert G\left(  k+it\right)  \right\Vert
_{p_{k}}.
\end{equation}
Then%
\begin{equation}
\left\Vert G\left(  \theta\right)  \right\Vert _{p_{\theta}}\leq
M_{0}^{1-\theta}M_{1}^{\theta}.\label{eq:hadamard-3-line}%
\end{equation}

\end{lemma}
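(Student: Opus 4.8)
The statement to be proved is the operator-valued Hadamard three-lines theorem, and the plan is to deduce it from its scalar version by pairing $G$ against a suitably interpolated family of ``test'' operators and exploiting the duality of the Schatten norms. Let $q_k$ denote the H\"older conjugate of $p_k$ (so $1/p_k + 1/q_k = 1$) for $k \in \{0,1\}$, and let $q_\theta$ be the conjugate of $p_\theta$; from \eqref{eq:p-relation} one checks at once that $1/q_\theta = (1-\theta)/q_0 + \theta/q_1$, i.e.\ the conjugate exponents obey the same interpolation identity. Using $\left\Vert A\right\Vert _{p_\theta} = \sup\{\, |\operatorname{Tr}\{AY\}| : \left\Vert Y\right\Vert _{q_\theta}\leq 1\,\}$, it then suffices to fix an operator $Y$ with $\left\Vert Y\right\Vert _{q_\theta} = 1$ and to show $|\operatorname{Tr}\{G(\theta)Y\}| \leq M_0^{1-\theta}M_1^\theta$.

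First I would construct the interpolating family for $Y$. Take a singular value decomposition $Y = U\sum_j s_j P_j$, with $U$ a partial isometry, $\{P_j\}$ mutually orthogonal rank-one projections, and $\sum_j s_j^{q_\theta} = 1$. Put $\gamma(z) = q_\theta\!\left(\tfrac{1-z}{q_0} + \tfrac{z}{q_1}\right)$ and define $Y_z = U\sum_j s_j^{\gamma(z)}P_j$. The virtue of this choice is that $\gamma(\theta) = 1$, so $Y_\theta = Y$, whereas $\operatorname{Re}\gamma(it) = q_\theta/q_0$ and $\operatorname{Re}\gamma(1+it) = q_\theta/q_1$, which forces $\left\Vert Y_{it}\right\Vert _{q_0} = \left(\sum_j s_j^{q_\theta}\right)^{1/q_0} = 1$ and likewise $\left\Vert Y_{1+it}\right\Vert _{q_1} = 1$ for every $t\in\mathbb{R}$. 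Next, form the scalar function $F(z) = \operatorname{Tr}\{G(z)Y_z\}$; it is holomorphic on the interior of $S$, continuous on $S$, and bounded there, since holomorphy and continuity pass through from $G$ and from the entire function $z\mapsto s_j^{\gamma(z)}$, while boundedness follows from the hypothesis that $G$ is bounded on $S$ together with H\"older's inequality and the fact that the Schatten norm of $Y_z$ stays bounded on the strip.

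Then I would apply the scalar Hadamard three-lines theorem to $F$ --- this is the genuinely ``well known'' input, available from \cite{BL76} --- obtaining $|F(\theta)| \leq \bigl(\sup_{t}|F(it)|\bigr)^{1-\theta}\bigl(\sup_{t}|F(1+it)|\bigr)^{\theta}$. On each boundary line, H\"older's inequality for Schatten norms gives $|F(k+it)| = |\operatorname{Tr}\{G(k+it)Y_{k+it}\}| \leq \left\Vert G(k+it)\right\Vert _{p_k}\left\Vert Y_{k+it}\right\Vert _{q_k} \leq M_k$. Since $Y_\theta = Y$, this yields $|\operatorname{Tr}\{G(\theta)Y\}| \leq M_0^{1-\theta}M_1^\theta$, and taking the supremum over all $Y$ with $\left\Vert Y\right\Vert _{q_\theta}\leq 1$ establishes \eqref{eq:hadamard-3-line}.

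The main obstacle I anticipate is endpoint bookkeeping rather than anything conceptual. When $p_0 = 1$ or $p_1 = 1$ (so $q_0 = \infty$ or $q_1 = \infty$) the relevant real part of $\gamma$ degenerates to $0$, and $Y_{it}$ (resp.\ $Y_{1+it}$) must be read as the partial isometry $U$ restricted to the support of $Y$, which has operator norm at most $1$; symmetrically, when $p_k = \infty$ the family $Y_z$ is constant in that regime. One must also adopt the convention $0^{\gamma(z)} = 0$ for the vanishing singular values, and, if $\mathcal{H}$ is infinite-dimensional, either reduce to finite dimensions by compressing to large spectral subspaces and passing to the limit, or insert a regularizing factor $e^{\varepsilon(z^2-z)}$ into $F$ to guarantee decay along the two lines before letting $\varepsilon\to 0$, just as in the standard proof of the scalar three-lines theorem. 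None of this obstructs the argument; it merely demands care in each intermediate norm estimate.
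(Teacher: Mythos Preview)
Your argument is correct and is essentially the standard Stein-type reduction of the operator-valued three-lines theorem to the scalar one via Schatten duality; the construction of $Y_z$ by raising singular values to the interpolated exponent $\gamma(z)$, the verification that $\left\Vert Y_{k+it}\right\Vert_{q_k}=1$ on each boundary line, and the closing appeal to H\"older are all fine, and your remarks on the endpoint cases $p_k\in\{1,\infty\}$ and the possible need for an $e^{\varepsilon(z^2-z)}$ regularization are accurate.

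However, there is nothing in the paper to compare against: Lemma~\ref{thm:hadamard} is stated there without proof, as a well-known input (the paper simply points to \cite{BL76} for complex interpolation and then \emph{uses} the lemma inside the proof of Theorem~1). So your write-up supplies what the paper deliberately omits. If one wanted to locate your argument in the literature, it is exactly the proof one finds in standard treatments of Stein interpolation for Schatten classes; the paper's own contribution lies elsewhere, namely in choosing the particular $G(z)=C_1^{z/q}V_{C_1}\cdots C_L^{z/q}V_{C_L}$ and reading off the monotonicity from the lemma.
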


The following lemma is based on Hirschman's improvement of the Hadamard
three-line theorem \cite{H52}.

\begin{lemma}[Stein--Hirschman]
\label{thm:op-hirschman} Let $S\equiv\left\{  z\in\mathbb{C}:0\leq
\operatorname{Re}\left\{  z\right\}  \leq1\right\}  $ and let $G:S\rightarrow
L(\mathcal{H})$ be a bounded map that is holomorphic on the interior of $S$
and continuous on the boundary. Let $\theta\in(0,1)$ and define $p_{\theta}$
by%
\begin{equation}
\frac{1}{p_{\theta}}=\frac{1-\theta}{p_{0}}+\frac{\theta}{p_{1}}\ ,
\end{equation}
where $p_{0},p_{1}\in\left[  1,\infty\right]  $. Then the following bound
holds%
\begin{equation}
\log\left\Vert G(\theta)\right\Vert _{p_{\theta}}\leq\int_{-\infty}^{\infty
}dt\ \alpha_{\theta}(t)\log\left\Vert G(it)\right\Vert _{p_{0}}^{1-\theta
}+\beta_{\theta}(t)\log\left\Vert G(1+it)\right\Vert _{p_{1}}^{\theta},
\label{eq:oper-hirschman}%
\end{equation}
where $\alpha_{\theta}(t)$ and $\beta_{\theta}(t)$ are defined by
\begin{align*}
\alpha_{\theta}(t)  &  \equiv\frac{\sin(\pi\theta)}{2(1-\theta)\left[
\cosh(\pi t)-\cos(\pi\theta)\right]  },\\
\beta_{\theta}(t)  &  \equiv\frac{\sin(\pi\theta)}{2\theta\left[  \cosh(\pi
t)+\cos(\pi\theta)\right]  }\ .
\end{align*}

\end{lemma}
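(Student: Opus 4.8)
The plan is to reduce the operator inequality to a scalar one by Schatten-norm duality and then invoke the Poisson representation for the strip $S$. By a routine reduction we may assume $\mathcal{H}$ is finite-dimensional: replacing $G$ by its compression $P_{n}G(\cdot)P_{n}$ to an $n$-dimensional subspace only decreases the right-hand norms, and letting $n\to\infty$ recovers $\left\Vert G(\theta)\right\Vert _{p_{\theta}}$ on the left, so it suffices to treat the finite-dimensional case, in which all traces below are finite. Fix $\varepsilon>0$. Let $p_{\theta}',p_{0}',p_{1}'$ be the H\"older conjugates of $p_{\theta},p_{0},p_{1}$. By duality of Schatten norms there is an operator $B$ with $\left\Vert B\right\Vert _{p_{\theta}'}\leq1$, whose phase may be chosen so that $\operatorname{Tr}\{G(\theta)B\}$ is real and satisfies $\operatorname{Tr}\{G(\theta)B\}\geq\left\Vert G(\theta)\right\Vert _{p_{\theta}}-\varepsilon$. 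Writing the polar decomposition $B=U|B|$ and setting $\gamma(z)=\tfrac{1-z}{p_{0}'}+\tfrac{z}{p_{1}'}$, so that $\gamma(\theta)=1/p_{\theta}'$, I would introduce the interpolating family
\begin{equation}
B(z)=U\,|B|^{\,p_{\theta}'\gamma(z)},
\end{equation}
with the usual conventions ($|B|^{0}$ the projection onto the support of $B$, and the obvious modification when $p_{0}'$ or $p_{1}'$ equals $\infty$, this last case being the one relevant to the applications where $p_{0}=\infty$).

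One checks directly that $B(\theta)=B$, that $z\mapsto B(z)$ is bounded and holomorphic on the interior of $S$ and continuous on $\partial S$ (boundedness because the factor $|B|^{\,ip_{\theta}'\operatorname{Im}\gamma(z)}$ has operator norm at most one and the remaining factor is uniformly bounded on $S$), and that $\left\Vert B(it)\right\Vert _{p_{0}'}\leq1$ and $\left\Vert B(1+it)\right\Vert _{p_{1}'}\leq1$ for all $t\in\mathbb{R}$, since on those two lines $\operatorname{Re}(p_{\theta}'\gamma(z))$ equals $p_{\theta}'/p_{0}'$ and $p_{\theta}'/p_{1}'$ respectively, so that $\operatorname{Tr}\{|B|^{\,p_{k}'\operatorname{Re}(p_{\theta}'\gamma)}\}=\operatorname{Tr}\{|B|^{\,p_{\theta}'}\}\leq1$. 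Set $F(z)\equiv\operatorname{Tr}\{G(z)B(z)\}$, a bounded function holomorphic on the interior of $S$ and continuous on $\partial S$. H\"older's inequality for Schatten norms then gives $|F(it)|\leq\left\Vert G(it)\right\Vert _{p_{0}}$ and $|F(1+it)|\leq\left\Vert G(1+it)\right\Vert _{p_{1}}$, while $|F(\theta)|=\operatorname{Tr}\{G(\theta)B\}\geq\left\Vert G(\theta)\right\Vert _{p_{\theta}}-\varepsilon$.

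The problem is thereby reduced to the scalar estimate
\begin{equation}
\log|F(\theta)|\leq\int_{-\infty}^{\infty}dt\ \big[(1-\theta)\,\alpha_{\theta}(t)\,\log|F(it)|+\theta\,\beta_{\theta}(t)\,\log|F(1+it)|\big],
\end{equation}
valid for any bounded holomorphic $F$ on $S$. This is Hirschman's sharpening of the Hadamard three-line theorem \cite{H52}: the function $\log|F|$ is subharmonic on $S$ (and is $\equiv-\infty$ only in the trivial case $F\equiv0$), the densities $(1-\theta)\alpha_{\theta}(t)$ and $\theta\beta_{\theta}(t)$ are precisely the harmonic measure of the lines $\operatorname{Re}z=0$ and $\operatorname{Re}z=1$ as seen from the point $z=\theta$, and a subharmonic function bounded above is dominated by the Poisson integral of its boundary values. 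Rigorously this is carried out by conformally mapping $S$ onto the unit disk and applying the corresponding statement there; boundedness of $F$ guarantees that $(\log|F|)_{+}$ is integrable against harmonic measure, so the right-hand integral is well defined (possibly $-\infty$), and a Phragm\'en--Lindel\"of argument handles the behaviour as $\operatorname{Im}z\to\pm\infty$.

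Finally I would combine the two ingredients: using $(1-\theta)\log|F(it)|\leq\log\left\Vert G(it)\right\Vert _{p_{0}}^{1-\theta}$ and $\theta\log|F(1+it)|\leq\log\left\Vert G(1+it)\right\Vert _{p_{1}}^{\theta}$, and then letting $\varepsilon\to0$, one obtains exactly \eqref{eq:oper-hirschman}. The verifications concerning $B(z)$ in the first two paragraphs are routine; the genuine content, and the step I expect to be the main obstacle, is the scalar Poisson-kernel estimate of the third paragraph --- in particular identifying the harmonic-measure densities of the strip as $(1-\theta)\alpha_{\theta}(t)$ and $\theta\beta_{\theta}(t)$, with the normalizations $\int\alpha_{\theta}(t)\,dt=\int\beta_{\theta}(t)\,dt=1$, and justifying the majorization of the subharmonic function $\log|F|$ including integrability of its boundary trace and its growth at infinity. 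This is precisely the technical core supplied by \cite{H52}.
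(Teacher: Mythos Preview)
The paper does not actually prove this lemma; it is stated without proof as a known result, with the preceding sentence attributing it to Hirschman's improvement of the Hadamard three-line theorem and citing \cite{H52} (and \cite{S56} earlier in the paper for the operator version). So there is no ``paper's own proof'' to compare against.

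That said, your sketch is the standard route to the operator Stein--Hirschman inequality and is correct in outline: Stein's duality trick --- choosing a near-optimal $B$ in the unit ball of the dual Schatten class, deforming it to an analytic family $B(z)=U|B|^{p_{\theta}'\gamma(z)}$ with the correct boundary norms, and pairing to get a scalar function $F(z)=\operatorname{Tr}\{G(z)B(z)\}$ --- reduces the statement to the scalar Hirschman lemma, which in turn is precisely the Poisson-integral domination of the subharmonic function $\log|F|$ on the strip by its boundary values, with $(1-\theta)\alpha_{\theta}$ and $\theta\beta_{\theta}$ being the harmonic-measure densities at $z=\theta$. The verifications you flag (boundedness and holomorphy of $B(z)$, the boundary norm bounds, the identification of the Poisson kernels) are all routine. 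One minor point: in the finite-dimensional reduction you write that compression ``only decreases the right-hand norms''; in fact it decreases \emph{all} the Schatten norms, including those on the left, but since the left-hand side is recovered by taking $n\to\infty$ the argument still goes through. Otherwise the sketch is sound and matches what one finds in standard references.
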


\begin{remark}
Fix $\theta\in(0,1)$. Observe that $\alpha_{\theta}(t),\beta_{\theta}(t)\geq0$
for all $t\in\mathbb{R}$ and we have
\begin{equation}
\int_{-\infty}^{\infty}dt\ \alpha_{\theta}(t)=\int_{-\infty}^{\infty}%
dt\ \beta_{\theta}(t)=1\ ,
\end{equation}
(see, e.g., \cite[Exercise~1.3.8]{G08}) so that $\alpha_{\theta}(t)$ and
$\beta_{\theta}(t)$ can be interpreted as probability density functions.
Furthermore, the following limit holds
\begin{equation}
\lim_{\theta\searrow0}\beta_{\theta}(t)=\frac{\pi}{2\left[  \cosh(\pi
t)+1\right]  }=\beta_{0}(t)\ , \label{eq:dist-limit}%
\end{equation}
where $\beta_{0}$ is also a probability density function on $\mathbb{R}$.
\end{remark}

\bibliographystyle{alpha}
\bibliography{Ref}

\end{document}